\providecommand{\algorithmname}{Algorithm}
\theoremstyle{plain}
\theoremstyle{plain}
\providecommand{\propositionname}{Proposition}
\providecommand{\theoremname}{Theorem}
\begin{document}

\title{Fronthaul Quantization as Artificial Noise for Enhanced Secret Communication
in C-RAN}

\author{\IEEEauthorblockN{$^{1}$Seok-Hwan Park, \textit{Member}, \textit{IEEE}, $^{2}$Osvaldo Simeone, \textit{Fellow}, \textit{IEEE}, and $^{3}$Shlomo Shamai (Shitz), \textit{Fellow}, \textit{IEEE}} \IEEEauthorblockA{$^{1}$Division of Electronic Engineering, Chonbuk National University,
Jeonju-si, Jeollabuk-do, 54896 Korea\\
$^{2}$CWiP, New Jersey Institute of Technology, 07102 Newark, New
Jersey, USA\\
$^{3}$Department of Electrical Engineering, Technion, Haifa, 32000,
Israel\\
 Email: seokhwan@jbnu.ac.kr, osvaldo.simeone@njit.edu, sshlomo@ee.technion.ac.il}}
\maketitle
\begin{abstract}
This work considers the downlink of a cloud radio access network (C-RAN),
in which a control unit (CU) encodes confidential messages, each of
which is intended for a user equipment (UE) and is to be kept secret
from all the other UEs. As per the C-RAN architecture, the encoded
baseband signals are quantized and compressed prior to the transfer
to distributed radio units (RUs) that are connected to the CU via
finite-capacity fronthaul links. This work argues that the quantization
noise introduced by fronthaul quantization can be leveraged to act
as ``artificial'' noise in order to enhance the rates achievable
under secrecy constraints. To this end, it is proposed to control
the statistics of the quantization noise by applying multivariate,
or joint, fronthaul quantization/compression at the CU across all
outgoing fronthaul links. Assuming wiretap coding, the problem of
jointly optimizing the precoding and multivariate compression strategies,
along with the covariance matrices of artificial noise signals generated
by RUs, is formulated with the goal of maximizing the weighted sum
of achievable secrecy rates while satisfying per-RU fronthaul capacity
and power constraints. After showing that the artificial noise covariance
matrices can be set to zero without loss of optimaliy, an iterative
optimization algorithm is derived based on the concave convex procedure
(CCCP), and some numerical results are provided to highlight the advantages
of leveraging quantization noise as artificial noise.\end{abstract}

\begin{IEEEkeywords}
C-RAN, physical-layer security, fronthaul quantization, beamforming.
\end{IEEEkeywords}

\theoremstyle{theorem}
\newtheorem{theorem}{Theorem}
\theoremstyle{proposition}
\newtheorem{proposition}{Proposition}
\theoremstyle{lemma}
\newtheorem{lemma}{Lemma}
\theoremstyle{corollary}
\newtheorem{corollary}{Corollary}
\theoremstyle{definition}
\newtheorem{definition}{Definition}
\theoremstyle{remark}
\newtheorem{remark}{Remark}

\section{Introduction\label{sec:Introduction}}

\let\thefootnote\relax\footnotetext{The work of S.-H. Park was supported by the NRF Korea funded by the Ministry of Science, ICT $\&$ Future Planning (MSIP) through grant 2015R1C1A1A01051825. The work of O. Simeone was partially supported by the U.S. NSF through grant 1525629. The work of S. Shamai has been supported by the European Union's Horizon 2020 Research And Innovation Programme, grant agreement no. 694630.}

Motivated by the original works on the wiretap channel \cite{Wyner}\cite{Csiszar-Korner},
\textit{\emph{physical-layer security techniques have been extensively
studied as effective means of protecting data secrecy for communications
over wireless channels in a variety of scenarios \cite{Liang-et-al:09}\cite{Bloch-Barros}.
One of the key techniques that have been devised for enhancing the
rates at which information can be transmitted securely is the addition
of artificial noise at the transmitter \cite{Goel}-\cite{Liu-et-al:TIT10}.
This strategy finds its theoretical justification in the prefix channel
approach that was shown in \cite{Csiszar-Korner} to achieve the secrecy
capacity of a general wiretap channel. }}

\begin{figure}
\centering\includegraphics[width=8cm,height=3.5cm]{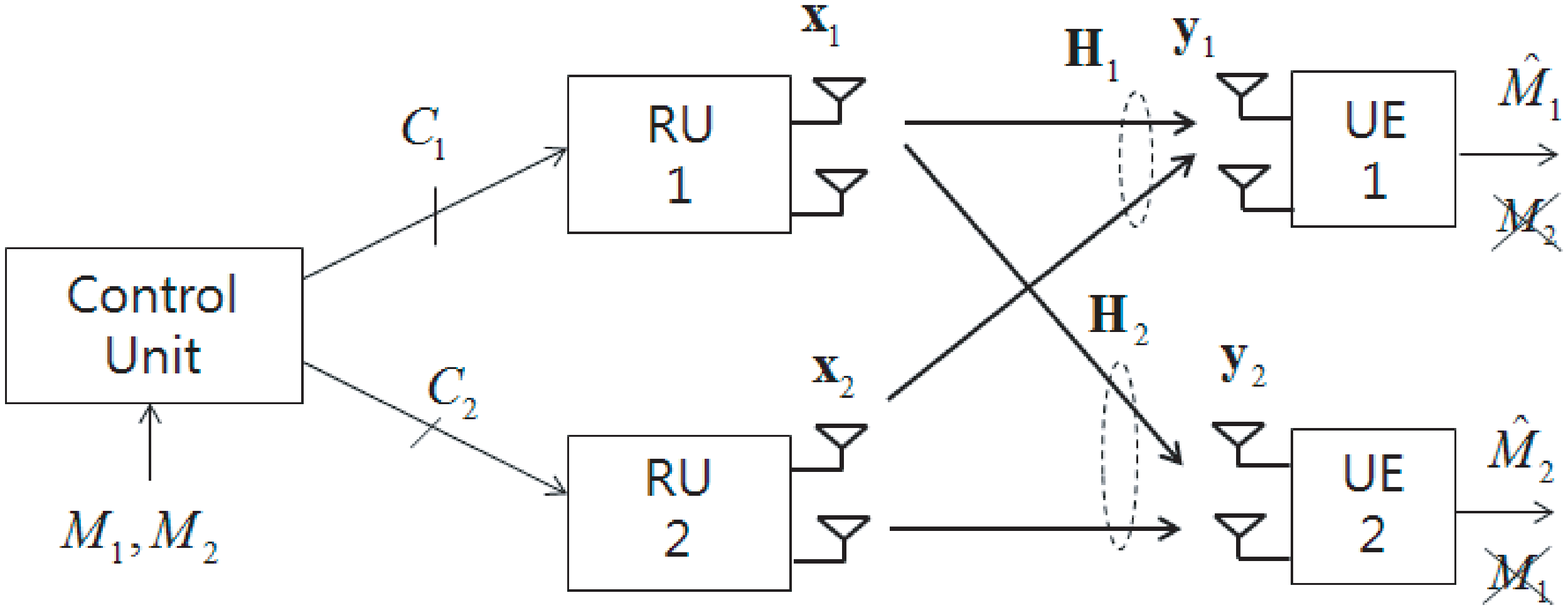}

\caption{\label{fig:System-Model}Illustration of the downlink of a C-RAN system
with confidential messages for $N_{R}=2$ RUs, $N_{U}=2$ UEs, $n_{R,i}=2$
RU antennas and $n_{U,k}=2$ UE antennas.}
\end{figure}

\textit{\emph{In this work, we study physical-layer secure communication
in the context of the downlink of cloud radio access networks}} (C-RANs).
In a C-RAN, a control unit (CU) performs joint encoding of the messages
intended for all the user equipments (UEs) located in the geographical
area covered by the radio units (RUs) connected to the CU. The encoded
baseband signals are then transferred to the RUs on the fronthaul
link in analog or digital format. For digital fronthauling, the CU
quantizes and compresses the encoded baseband signals prior to the
transfer to the RUs due to the limited bit rate of the fronthaul links
\cite{Park-et-al:TSP13}\cite{Simeone-et-al:JCN}.

In this work, we argue that the quantization noise introduced by fronthaul
quantization can be leveraged to act as artificial noise in order
to enhance the rates achievable under secrecy constraints. To this
end, as an extension of the work \cite{Park-et-al:TSP13}, we propose
to apply multivariate, or joint, fronthaul quantization/compression
\cite{ElGamal} at the CU for all outgoing fronthaul links in order
to control the statistics of the quantization noise. Multivariate
quantization/compression was recently shown in \cite{Park-et-al:TSP13}
to improve the performance of C-RANs without secrecy constraints with
respect to standard per-fronthaul link point-to-point quantization/compression.

In the rest of this paper, we first formulate the problem of jointly
optimizing precoding, multivariate compression and the covariance
matrices of artificial noise signals generated by RUs, with the goal
of maximizing the weighted sum of achievable secrecy rates of the
intended UEs subject to per-RU fronthaul capacity and power constraints
(Sec. \ref{sec:Problem-Definition-Linear}). We show that the artificial
noise covariance matrices can be set to zero with no loss of optimality
and hence focus on the joint optimization of precoding and multivariate
compression. An iterative optimization algorithm is then derived based
on the concave convex procedure (CCCP) (Sec. \ref{sec:Problem-Definition-Linear}),
and numerical evidence is provided to highlight the advantages of
the proposed schemes (Sec. \ref{sec:Numerical-Results}).

\section{System Model\label{sec:System-Model}}

We consider the downlink of a C-RAN in which a single CU controls
$N_{R}$ RUs. The CU communicates with the $i$th RU through a fronthaul
link of capacity $C_{i}$ bits/s/Hz where the normalization is with
respect to the downlink bandwidth. The $N_{R}$ RUs transmit signals
to $N_{U}$ UEs located in the union of the coverage areas of the
RUs. We define the sets $\mathcal{N}_{R}\triangleq\{1,\ldots,N_{R}\}$
and $\mathcal{N}_{U}\triangleq\{1,\ldots,N_{U}\}$ of the RUs and
the UEs, and denote the numbers of antennas of RU $i$ and UE $k$
as $n_{R,i}$ and $n_{U,k}$, respectively. An illustration is shown
in Fig. \ref{fig:System-Model}.

The signal $\mathbf{y}_{k}\in\mathbb{C}^{n_{U,k}\times1}$ received
by UE $k$ is given by
\begin{equation}
\mathbf{y}_{k}=\sum_{i\in\mathcal{N}_{R}}\mathbf{H}_{k,i}\mathbf{x}_{i}+\mathbf{z}_{k},\label{eq:received-signal-MS}
\end{equation}
where $\mathbf{H}_{k,i}\in\mathbb{C}^{n_{U,k}\times n_{R,i}}$ represents
the channel response matrix from RU $i$ to UE $k$; $\mathbf{x}_{i}\in\mathbb{C}^{n_{R,i}\times1}$
indicates the signal transmitted by RU $i$; and $\mathbf{z}_{k}\in\mathbb{C}^{n_{U,k}\times1}$
is the additive noise at UE $k$ distributed as $\mathbf{z}_{k}\sim\mathcal{CN}(\mathbf{0},\mathbf{\Sigma}_{\mathbf{z}_{k}})$.
Each RU $i$ is subject to the power constraint $\mathbb{E}\left\Vert \mathbf{x}_{i}\right\Vert ^{2}\leq P_{i}$.

The information messages $M_{k}\in\{1,\ldots,2^{nR_{k}}\}$, each
of rate $R_{k}$ bits/s/Hz, are encoded within a block of $n$ channel
uses, where $n$ is large enough to justify the use of information-theoretic
limits. The message $M_{k}$ is intended for UE $k\in\mathcal{N}_{U}$
and is required to be kept secret from the other UEs. Specifically,
we impose that the UEs $l\in\mathcal{N}_{U}\setminus\{k\}$ be unable
to decode the message $M_{k}$ even in the worst-case scenario where
they cooperate since the CU cannot control the activities of the UEs
(see \cite{Wyner}). The messages $\{M_{k}\}_{k\in\mathcal{N}_{U}}$
are processed by the CU in the two steps described in the following
subsections before being transferred to the RUs.

\subsection{Linear Precoding\label{sub:Precoding}}

The CU first encodes each message $M_{k}$ using a wiretap code \cite{Wyner}
and obtaining an encoded signal $\mathbf{s}_{k}\in\mathbb{C}^{d_{k}\times1}$,
which is distributed as $\mathbf{s}_{k}\sim\mathcal{CN}(\mathbf{0},\mathbf{I})$.
We assume that the number $d_{k}$ of data streams satisfies the condition
$d_{k}\leq\min\{n_{R},n_{U,k}\}$ with the notation $n_{R}\triangleq\sum_{i\in\mathcal{N}_{R}}n_{R,i}$.
In order to enable the management of the inter-UE interference and
to enhance secrecy, the CU performs linear precoding, or beamforming,
with a precoding matrix $\mathbf{A}\in\mathbb{C}^{n_{R}\times d}$,
yielding the precoded signal $\tilde{\mathbf{x}}=[\tilde{\mathbf{x}}_{1};\ldots;\tilde{\mathbf{x}}_{N_{R}}]\in\mathbb{C}^{n_{R}\times1}$
with
\begin{equation}
\tilde{\mathbf{x}}=\mathbf{A}\mathbf{s},\label{eq:precoding}
\end{equation}
where $\tilde{\mathbf{x}}_{i}\in\mathbb{C}^{n_{R,i}\times1}$ is the
signal to be communicated to RU $i$; $\mathbf{s}\triangleq[\mathbf{s}_{1};\ldots;\mathbf{s}_{N_{U}}]$
is the vector of the signals encoded for the UEs; and we have defined
the notation $d\triangleq\sum_{k\in\mathcal{N}_{U}}d_{k}$. We note
that the discussion can be easily extended to systems, in which the
CU performs non-linear secrecy dirty-paper coding (S-DPC) precoding
proposed in \cite{Liu-et-al:TIT10}.

\subsection{Fronthaul Compression\label{sub:Fronthaul-Compression}}

The precoded baseband signal $\tilde{\mathbf{x}}_{i}$ needs to be
compressed prior to transmission to the RU, since the CU communicates
to RU $i$ through a fronthaul link of capacity $C_{i}$ bits/s/Hz.
Using standard rate-distortion considerations, we model the impact
of compression by adding a quantization noise $\mathbf{q}_{i}$ to
the compression input signal $\tilde{\mathbf{x}}_{i}$ so that the
compression output signal $\hat{\mathbf{x}}_{i}$ is given as
\begin{equation}
\hat{\mathbf{x}}_{i}=\tilde{\mathbf{x}}_{i}+\mathbf{q}_{i},\label{eq:compression-model}
\end{equation}
where the quantization noise $\mathbf{q}_{i}$ is independent of the
signal $\tilde{\mathbf{x}}_{i}$ and is distributed as $\mathbf{q}_{i}\sim\mathcal{CN}(\mathbf{0},\mathbf{\Omega}_{i,i})$.
Each RU $i$ decompresses the baseband signal $\hat{\mathbf{x}}_{i}$
based on the bit stream received on the fronthaul link. We emphasize
that, as done in, e.g., \cite{Park-et-al:TSP13}, quantization is
not designed so as to minimize the (e.g., quadratic) distortion between
the precoded signals $\tilde{\mathbf{x}}_{i}$ and the compressed
signals $\hat{\mathbf{x}}_{i}$, but rather with the aim of maximizing
the weighted sum of achievable secrecy rates, which will be defined
in Sec. \ref{sub:Achievable-Rates}.

In the standard point-to-point compression approach \cite{Simeone-et-al:ETT09},
in which the precoded signals $\tilde{\mathbf{x}}_{i}$ and $\tilde{\mathbf{x}}_{j}$
for different RUs $i\neq j$ are separately compressed, the quantization
noises $\mathbf{q}_{i}$ and $\mathbf{q}_{j}$ are independent, i.e.,
$\mathbf{\Omega}_{i,j}=\mathbf{0}$ for $i\neq j$. Instead, multivariate,
or joint, compression \cite[Ch. 9]{ElGamal} allows the CU to correlate
the quantization noises $\mathbf{q}_{1},\ldots,\mathbf{q}_{N_{R}}$
by jointly compressing the signals $\tilde{\mathbf{x}}_{1},\ldots,\tilde{\mathbf{x}}_{N_{R}}$.
This adds a further degree of freedom to the system design, which
will be leveraged here to enhance physical-layer security. It was
shown in \cite[Sec. IV-D]{Park-et-al:TSP13} that multivariate compression
can be implemented with no loss of optimality using a low-complexity
sequential processing architecture.

Specifically, in this work, we propose to shape the quantization noise
signals in order to enhance the secrecy performance by controlling
the correlation matrix $\mathbf{\Omega}$ of the quantization noise
vector $\mathbf{q}\triangleq[\mathbf{q}_{1};\ldots;\mathbf{q}_{N_{R}}]$,
where the covariance matrix $\mathbf{\Omega}\triangleq\mathbb{E}[\mathbf{q}\mathbf{q}^{\dagger}]$
is given as a block matrix whose $(i,j)$th block is $\mathbf{\Omega}_{i,j}\triangleq\mathbb{E}[\mathbf{q}_{i}\mathbf{q}_{j}^{\dagger}]$.
As mentioned, this control can be realized by means of multivariate
compression, which was recently demonstrated in \cite{Park-et-al:TSP13}
to achieve performance gains in terms of non-secrecy information rates.

It is a classic result in network information theory that the quantized
signals (\ref{eq:compression-model}) with the given quantization
noise covariance $\mathbf{\Omega}$ can be recovered by the RUs if
the conditions
\begin{align}
g_{\mathcal{S}}\left(\mathbf{A},\mathbf{\Omega}\right)\triangleq & \sum_{i\in\mathcal{S}}h\left(\mathbf{x}_{i}\right)-h\left(\hat{\mathbf{x}}_{\mathcal{S}}|\tilde{\mathbf{x}}\right)\nonumber \\
= & \sum_{i\in\mathcal{S}}\log_{2}\det\left(\mathbf{E}_{i}^{\dagger}(\mathbf{A}\mathbf{A}^{\dagger}+\mathbf{\Omega})\mathbf{E}_{i}\right)\nonumber \\
- & \log_{2}\det\left(\mathbf{E}_{\mathcal{S}}^{\dagger}\mathbf{\Omega}\mathbf{E}_{\mathcal{S}}\right)\leq\sum_{i\in\mathcal{S}}C_{i}\label{eq:fronthaul-constraint-multivariate}
\end{align}
are satisfied for all subsets $\mathcal{S}\subseteq\mathcal{N}_{R}$,
where we have defined the set $\hat{\mathbf{x}}_{\mathcal{S}}\triangleq\{\hat{\mathbf{x}}_{i}\}_{i\in\mathcal{S}}$
and the matrix $\mathbf{E}_{\mathcal{S}}$ obtained by stacking the
matrices $\mathbf{E}_{i}$ for $i\in\mathcal{S}$ horizontally with
the matrices $\mathbf{E}_{i}\in\mathbb{C}^{n_{R}\times n_{R,i}}$
having all-zero elements except for the rows from $(\sum_{j=1}^{i-1}n_{R,j}+1)$
to $(\sum_{j=1}^{i}n_{R,j})$ being the identity matrix of size $n_{R,i}$
\cite[Ch. 9]{ElGamal}.

\subsection{Artificial Noise\label{sub:Artificial-Noise}}

Based on the decompressed baseband signal $\hat{\mathbf{x}}_{i}$,
each RU $i$ creates the signal $\mathbf{x}_{i}$ to be transmitted
in the downlink as
\begin{equation}
\mathbf{x}_{i}=\hat{\mathbf{x}}_{i}+\mathbf{n}_{i},\label{eq:adding-artificial-noise}
\end{equation}
where $\mathbf{n}_{i}$ represents the artificial noise signal generated
by RU $i$ and is distributed as $\mathbf{n}_{i}\sim\mathcal{CN}(\mathbf{0},\mathbf{\Phi}_{i})$.
The artificial noise signals $\mathbf{n}_{i}$ are independent across
the index $i$ since each signal $\mathbf{n}_{i}$ is locally produced
by the corresponding RU. As for the quantization noise signal $\mathbf{q}_{i}$,
we need to carefully design the covariance matrix $\mathbf{\Phi}_{i}$
based on the channel matrices in order to enhance secrecy.

\subsection{Achievable Secrecy Rates\label{sub:Achievable-Rates}}

The signal $\mathbf{y}_{k}$ in (\ref{eq:received-signal-MS}) received
by UE $k$ can be written as
\begin{align}
\mathbf{y}_{k}=\!\mathbf{H}_{k}\mathbf{A}_{k}\mathbf{s}_{k}+\!\!\sum_{l\in\mathcal{N}_{U}\setminus\{k\}}\!\!\mathbf{H}_{k}\mathbf{A}_{l}\mathbf{s}_{l}\!+\!\mathbf{H}_{k}\!\left(\mathbf{q}+\mathbf{n}\!\right)\!+\mathbf{z}_{k},\label{eq:received-signal-rewritten}
\end{align}
where we defined the channel matrix $\mathbf{H}_{k}\triangleq[\mathbf{H}_{k,1}\,\ldots\,\mathbf{H}_{k,N_{R}}]$
from all the RUs to UE $k$, the aggregate vector $\mathbf{n}\triangleq[\mathbf{n}_{1};\ldots;\mathbf{n}_{N_{R}}]$
of the artificial noise signals and the submatrix $\mathbf{A}_{k}\in\mathbb{C}^{n_{R}\times d_{k}}$
of $\mathbf{A}$ multiplied to the signal $\mathbf{s}_{k}$ encoded
for UE $k$. The first term in (\ref{eq:received-signal-rewritten})
indicates the desired signal to be decoded by the receiving UE $k$,
the second term represents the inter-UE interference signals, which
encode the unintended messages, and the third and last terms are channelized
quantization noise and antenna additive noise signals, respectively.
Eq. (\ref{eq:received-signal-rewritten}) suggests that a joint design
of $\mathbf{A}$ and $\mathbf{\Omega}$ has the potential to jointly
``shape'' the useful signals and the quantization noise signals
to enhance the secrecy rate.

Assuming that each UE $k$ decodes the message $M_{k}$ based on the
signal $\mathbf{y}_{k}$ in (\ref{eq:received-signal-rewritten})
while treating the interference signals as noise, it was shown in
\cite{Wyner} that the rate
\begin{align}
 & R_{k}=\left[f_{k}\left(\mathbf{A},\mathbf{\Omega},\mathbf{\Phi}\right)\right]^{+}\label{eq:achievable-rate-1}
\end{align}
is achievable for UE $k$ ensuring that the other UEs cannot decode
the message $M_{k}$, where we defined the function
\begin{align}
 & f_{k}\left(\mathbf{A},\mathbf{\Omega},\mathbf{\Phi}\right)\triangleq I\left(\mathbf{s}_{k};\mathbf{y}_{k}\right)-I\left(\mathbf{s}_{k};\mathbf{y}_{\bar{k}}\right)\label{eq:achievable-rate-mutual-information}\\
 & =\phi\left(\!\mathbf{H}_{k}\mathbf{R}_{l}\mathbf{H}_{k}^{\dagger},\sum_{l\in\mathcal{N}_{U}\setminus\{k\}}\!\!\mathbf{H}_{k}\mathbf{R}_{l}\mathbf{H}_{k}^{\dagger}+\!\mathbf{H}_{k}(\mathbf{\Omega}+\mathbf{\Phi})\mathbf{H}_{k}^{\dagger}+\!\mathbf{\Sigma}_{\mathbf{z}_{k}}\!\right)\nonumber \\
 & -\phi\left(\!\mathbf{H}_{\bar{k}}\mathbf{R}_{k}\mathbf{H}_{\bar{k}}^{\dagger},\sum_{l\in\mathcal{N}_{U}\setminus\{k\}}\!\!\mathbf{H}_{\bar{k}}\mathbf{R}_{l}\mathbf{H}_{\bar{k}}^{\dagger}+\!\mathbf{H}_{\bar{k}}(\mathbf{\Omega}+\mathbf{\Phi})\mathbf{H}_{\bar{k}}^{\dagger}+\!\mathbf{\Sigma}_{\mathbf{z}_{\bar{k}}}\!\right),\nonumber
\end{align}
with the functions $\phi(\mathbf{A},\mathbf{B})\triangleq\log_{2}\det(\mathbf{A}+\mathbf{B})-\log_{2}\det(\mathbf{B})$
and $[x]^{+}=\max(0,x)$ and the notations $\mathbf{\Phi}\triangleq\mathrm{diag}(\{\mathbf{\Phi}_{i}\}_{i\in\mathcal{N}_{R}})$
and $\mathbf{R}_{k}\triangleq\mathbf{A}_{k}\mathbf{A}_{k}^{\dagger}$.
The vector $\mathbf{y}_{\bar{k}}$, defined as
\begin{equation}
\mathbf{y}_{\bar{k}}\triangleq[\mathbf{y}_{1};\ldots;\mathbf{y}_{k-1};\mathbf{y}_{k+1};\ldots;\mathbf{y}_{N_{U}}]=\mathbf{H}_{\bar{k}}\mathbf{x}+\mathbf{z}_{\bar{k}},
\end{equation}
represents the vector obtained by stacking the signals $\mathbf{y}_{l}$
received by the malicious UEs $l\in\mathcal{N}_{U}\setminus\{k\}$,
where we have defined the notations $\mathbf{H}_{\bar{k}}\triangleq[\mathbf{H}_{1}^{\dagger}\ldots\mathbf{H}_{k-1}^{\dagger}\mathbf{H}_{k+1}^{\dagger}\ldots\mathbf{H}_{N_{U}}^{\dagger}]^{\dagger}$
and $\mathbf{z}_{\bar{k}}\triangleq[\mathbf{z}_{1}^{\dagger}\ldots\mathbf{z}_{k-1}^{\dagger}\mathbf{z}_{k+1}^{\dagger}\ldots\mathbf{z}_{N_{U}}^{\dagger}]^{\dagger}$.
The maximization of the secrecy sum-rate over the quantization noise
covariance matrix $\mathbf{\Omega}$ entails that the rate loss induced
by the quantization noise is minimized at the intended UE while it
is maximized at the unintended UEs.

\section{Problem Definition and Optimization\label{sec:Problem-Definition-Linear}}

We aim at optimizing the precoding matrix $\mathbf{A}$, the quantization
noise covariance matrix $\mathbf{\Omega}$ and the artificial noise
covariance matrix $\mathbf{\Phi}$ with the goal of maximizing the
weighted sum of secrecy rates subject to the per-RU power and the
fronthaul capacity constraints. The problem is stated as\begin{subequations}\label{eq:problem-linear}
\begin{align}
\underset{\mathbf{A},\{\mathbf{\Omega},\mathbf{\Phi}\succeq\mathbf{0}\}}{\mathrm{maximize}} & \sum_{k\in\mathcal{N}_{U}}w_{k}\left[f_{k}\left(\mathbf{A},\mathbf{\Omega},\mathbf{\Phi}\right)\right]^{+}\label{eq:objective-linear}\\
\mathrm{s.t.}\,\, & g_{\mathcal{S}}\left(\mathbf{A},\mathbf{\Omega}\right)\leq\sum_{i\in\mathcal{S}}C_{i},\,\,\mathrm{for\,\,all}\,\,\mathcal{S}\subseteq\mathcal{N}_{R},\label{eq:constraint-fronthaul-linear}\\
 & \mathrm{tr}\left(\mathbf{E}_{i}^{\dagger}\mathbf{A}\mathbf{A}^{\dagger}\mathbf{E}_{i}+\mathbf{\Omega}_{i,i}+\mathbf{\Phi}_{i}\right)\nonumber \\
 & \,\,\,\,\,\,\,\,\,\leq P_{i},\,\,\mathrm{for\,\,all}\,\,i\in\mathcal{N}_{R}.\label{eq:constraint-power-linear}
\end{align}
\end{subequations}The following lemma shows that we can reduce the
optimization domain without loss of optimality.

\begin{lemma}\label{lem:no-artificial-noise}Setting $\mathbf{\Phi}=\mathbf{0}$
in the problem (\ref{eq:problem-linear}), which corresponds to adding
no artificial noise at the RUs, does not cause any loss of optimality.

\end{lemma}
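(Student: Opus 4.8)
The plan is to take any optimal triple $(\mathbf{A},\mathbf{\Omega},\mathbf{\Phi})$ of (\ref{eq:problem-linear}) and exhibit a triple with $\mathbf{\Phi}=\mathbf{0}$ that is feasible and attains the same objective, thereby showing that discarding the artificial noise costs nothing. The guiding observation is that $\mathbf{\Phi}$ enters the secrecy functions $f_{k}$ in (\ref{eq:achievable-rate-mutual-information}) only through the sum $\mathbf{\Omega}+\mathbf{\Phi}$, enters the power constraints (\ref{eq:constraint-power-linear}) only through the per-RU sums $\mathbf{\Omega}_{i,i}+\mathbf{\Phi}_{i}$, and does not appear in the fronthaul functions $g_{\mathcal{S}}$ at all. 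This motivates keeping $\mathbf{A}$ fixed and setting $\mathbf{\Omega}'\triangleq\mathbf{\Omega}+\mathbf{\Phi}$ and $\mathbf{\Phi}'\triangleq\mathbf{0}$, i.e., folding the locally generated artificial noise into the quantization noise. By construction $\mathbf{\Omega}'+\mathbf{\Phi}'=\mathbf{\Omega}+\mathbf{\Phi}$ and $\mathbf{\Omega}'_{i,i}+\mathbf{\Phi}'_{i}=\mathbf{\Omega}_{i,i}+\mathbf{\Phi}_{i}$, so the objective (\ref{eq:objective-linear}) and all power constraints (\ref{eq:constraint-power-linear}) are preserved exactly.

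The only thing left to verify -- and the main obstacle -- is that the fronthaul constraints (\ref{eq:constraint-fronthaul-linear}) continue to hold, i.e., that $g_{\mathcal{S}}(\mathbf{A},\mathbf{\Omega}+\mathbf{\Phi})\le g_{\mathcal{S}}(\mathbf{A},\mathbf{\Omega})$ for every $\mathcal{S}\subseteq\mathcal{N}_{R}$. Writing $\mathbf{B}\triangleq\mathbf{A}\mathbf{A}^{\dagger}$ and using the block-diagonality of $\mathbf{\Phi}$, so that $\mathbf{E}_{i}^{\dagger}\mathbf{\Phi}\mathbf{E}_{i}=\mathbf{\Phi}_{i}$ and $\mathbf{E}_{\mathcal{S}}^{\dagger}\mathbf{\Phi}\mathbf{E}_{\mathcal{S}}=\mathrm{diag}(\{\mathbf{\Phi}_{i}\}_{i\in\mathcal{S}})\triangleq\mathbf{\Phi}_{\mathcal{S}}$, the increase of the positive part of $g_{\mathcal{S}}$ equals $\sum_{i\in\mathcal{S}}(\log_{2}\det(\mathbf{G}_{i}+\mathbf{\Phi}_{i})-\log_{2}\det\mathbf{G}_{i})$ with $\mathbf{G}_{i}\triangleq\mathbf{E}_{i}^{\dagger}(\mathbf{B}+\mathbf{\Omega})\mathbf{E}_{i}$, while the increase of the subtracted part equals $\log_{2}\det(\mathbf{D}+\mathbf{\Phi}_{\mathcal{S}})-\log_{2}\det\mathbf{D}$ with $\mathbf{D}\triangleq\mathbf{E}_{\mathcal{S}}^{\dagger}\mathbf{\Omega}\mathbf{E}_{\mathcal{S}}$. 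Since $\mathbf{M}\mapsto\log_{2}\det(\mathbf{M}+\mathbf{\Phi}_{i})-\log_{2}\det\mathbf{M}$ is nonincreasing in the positive-semidefinite order and $\mathbf{G}_{i}=\mathbf{\Omega}_{i,i}+\mathbf{E}_{i}^{\dagger}\mathbf{B}\mathbf{E}_{i}\succeq\mathbf{\Omega}_{i,i}$, the first increment is bounded above by $\sum_{i\in\mathcal{S}}(\log_{2}\det(\mathbf{\Omega}_{i,i}+\mathbf{\Phi}_{i})-\log_{2}\det\mathbf{\Omega}_{i,i})$. Hence it suffices to prove
\[
\sum_{i\in\mathcal{S}}\left(\log_{2}\det(\mathbf{\Omega}_{i,i}+\mathbf{\Phi}_{i})-\log_{2}\det\mathbf{\Omega}_{i,i}\right)\le\log_{2}\det(\mathbf{D}+\mathbf{\Phi}_{\mathcal{S}})-\log_{2}\det\mathbf{D}.
\]

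This last inequality is the crux, and it is equivalent to $F(\mathbf{D}+\mathbf{\Phi}_{\mathcal{S}})\le F(\mathbf{D})$ for the ``Fischer gap'' $F(\mathbf{M})\triangleq\sum_{i\in\mathcal{S}}\log_{2}\det\mathbf{M}_{i,i}-\log_{2}\det\mathbf{M}$, since the diagonal blocks of $\mathbf{D}$ and $\mathbf{\Phi}_{\mathcal{S}}$ are $\mathbf{\Omega}_{i,i}$ and $\mathbf{\Phi}_{i}$. I would establish it by recognizing $F(\mathbf{M})$ as the total correlation (multi-information, in bits) of a circularly symmetric complex Gaussian vector with covariance $\mathbf{M}$ partitioned into the per-RU blocks indexed by $\mathcal{S}$. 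The claim then states that adding the block-diagonal covariance $\mathbf{\Phi}_{\mathcal{S}}$, i.e., passing such a Gaussian vector through a bank of mutually independent per-block additive-Gaussian channels, cannot increase the statistical dependence among the blocks; this follows from the data-processing inequality for the Kullback--Leibler divergence, because $F(\mathbf{M})$ is the divergence between the joint law and the product of the block marginals, and the product channel maps the joint law and the product of marginals of the input to the corresponding laws of the output. Chaining the two bounds yields $g_{\mathcal{S}}(\mathbf{A},\mathbf{\Omega}+\mathbf{\Phi})\le g_{\mathcal{S}}(\mathbf{A},\mathbf{\Omega})$, so $(\mathbf{A},\mathbf{\Omega}+\mathbf{\Phi},\mathbf{0})$ is feasible and optimal, which proves the lemma. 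A minor technical point, handled by restricting to the range space of $\mathbf{\Omega}$ or by a continuity argument, is that feasibility forces the relevant determinants to be positive, so that the matrix inverses implicit above are well defined.
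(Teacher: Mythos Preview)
Your proof is correct and follows the same overall strategy as the paper: replace an optimal $(\mathbf{A},\mathbf{\Omega},\mathbf{\Phi})$ by $(\mathbf{A},\mathbf{\Omega}+\mathbf{\Phi},\mathbf{0})$, observe that the objective and power constraints depend only on $\mathbf{\Omega}+\mathbf{\Phi}$, and check that the fronthaul constraints can only loosen. The paper's proof simply asserts that $g_{\mathcal{S}}(\mathbf{A},\cdot)$ is ``non-increasing with respect to the covariance matrix $\mathbf{\Omega}$'' and stops there; you instead supply a genuine justification for the specific block-diagonal perturbation $\mathbf{\Phi}$, via a per-block log-det monotonicity step followed by a data-processing (total-correlation) step. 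This added rigor is not superfluous: the unqualified monotonicity statement is false for general positive-semidefinite perturbations (for instance, with $\mathbf{A}=\mathbf{0}$ the function $g_{\mathcal{S}}$ reduces to the Fischer gap of $\mathbf{E}_{\mathcal{S}}^{\dagger}\mathbf{\Omega}\mathbf{E}_{\mathcal{S}}$, which can strictly increase under a correlated perturbation of $\mathbf{\Omega}$). The block-diagonality of $\mathbf{\Phi}$---reflecting the independence of the locally generated artificial noises $\mathbf{n}_{i}$ across RUs---is thus essential, and your argument makes this dependence explicit where the paper's one-line proof leaves it implicit.
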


\begin{proof}Suppose that an optimal solution $(\mathbf{A}^{*},\mathbf{\Omega}^{*},\mathbf{\Phi}^{*})$
exists with $\mathbf{\Omega}^{*}\neq\mathbf{0}$. We can then define
another solution given by $(\mathbf{A}^{*},\mathbf{\Omega}^{*}+\mathbf{\Phi}^{*},\mathbf{0})$
that achieves exactly the same objective (\ref{eq:objective-linear})
without violating any of the constraints. This is because the left-hand
side of the fronthaul capacity constraint (\ref{eq:fronthaul-constraint-multivariate})
is non-increasing with respect to the covariance matrix $\mathbf{\Omega}$,
that is, adding quantization noise can only alleviate the fronthaul
overhead and hence any artificial noise added by the RUs can be directly
added to the quantization noise without loss of optimality.\end{proof}

From Lem. \ref{lem:no-artificial-noise}, we set $\mathbf{\Phi}=\mathbf{0}$
without loss of optimality. However, it is still not easy to solve
the problem (\ref{eq:problem-linear}) due to the non-smoothness (and
non-convexity) of the objective function. In order to make the problem
more tractable, we propose to solve an alternative problem obtained
by replacing the objective function with a smooth function
\begin{equation}
\sum_{k\in\mathcal{N}_{U}}w_{k}f_{k}\left(\mathbf{A},\mathbf{\Omega}\right),\label{eq:objective-LB}
\end{equation}
where we remove the dependence on the covariance $\mathbf{\Phi}=\mathbf{0}$.
Then, solving the obtained problem with respect to the variables $\mathbf{R}\triangleq\{\mathbf{R}_{k}\}_{k\in\mathcal{N}_{U}}$
and $\mathbf{\Omega}$ is a difference-of-convex (DC) program, and
we can adopt an iterative algorithm based on the CCCP as in \cite{Park-et-al:TSP13}.
The detailed algorithm is described in Algorithm I, where we defined
the functions $\tilde{f}_{k}(\{\mathbf{R}^{(t+1)},\mathbf{\Omega}^{(t+1)}\},\{\mathbf{R}^{(t)},\mathbf{\Omega}^{(t)}\})$
and $\tilde{g}_{\mathcal{S}}(\{\mathbf{R}^{(t+1)},\mathbf{\Omega}^{(t+1)}\},\{\mathbf{R}^{(t)},\mathbf{\Omega}^{(t)}\})$
as
\begin{align}
 & \tilde{f}_{k}\left(\{\mathbf{R}^{(t+1)},\mathbf{\Omega}^{(t+1)}\},\{\mathbf{R}^{(t)},\mathbf{\Omega}^{(t)}\}\right)\label{eq:linearized-objective}\\
\triangleq & \log_{2}\det\left(\sum_{l\in\mathcal{N}_{U}}\mathbf{H}_{k}\mathbf{R}_{l}^{(t+1)}\mathbf{H}_{k}^{\dagger}+\mathbf{H}_{k}\mathbf{\Omega}^{(t+1)}\mathbf{H}_{k}^{\dagger}+\mathbf{\Sigma}_{\mathbf{z}_{k}}\right)\nonumber \\
- & \log_{2}\det\left(\sum_{l\in\mathcal{N}_{U}\setminus\{k\}}\mathbf{H}_{k}\mathbf{R}_{l}^{(t)}\mathbf{H}_{k}^{\dagger}+\mathbf{H}_{k}\mathbf{\Omega}^{(t)}\mathbf{H}_{k}^{\dagger}+\mathbf{\Sigma}_{\mathbf{z}_{k}}\right)\nonumber \\
- & \log_{2}\det\left(\sum_{l\in\mathcal{N}_{U}}\mathbf{H}_{\bar{k}}\mathbf{R}_{l}^{(t)}\mathbf{H}_{\bar{k}}^{\dagger}+\mathbf{H}_{\bar{k}}\mathbf{\Omega}^{(t)}\mathbf{H}_{\bar{k}}^{\dagger}+\mathbf{\Sigma}_{\mathbf{z}_{\bar{k}}}\right)\nonumber \\
+ & \log_{2}\det\left(\sum_{l\in\mathcal{N}_{U}\setminus\{k\}}\mathbf{H}_{\bar{k}}\mathbf{R}_{l}^{(t+1)}\mathbf{H}_{\bar{k}}^{\dagger}+\mathbf{H}_{\bar{k}}\mathbf{\Omega}^{(t+1)}\mathbf{H}_{\bar{k}}^{\dagger}+\mathbf{\Sigma}_{\mathbf{z}_{\bar{k}}}\right)\nonumber \\
- & \varphi\left(\begin{array}{c}
\sum_{l\in\mathcal{N}_{U}\setminus\{k\}}\mathbf{H}_{k}\mathbf{R}_{l}^{(t+1)}\mathbf{H}_{k}^{\dagger}+\mathbf{H}_{k}\mathbf{\Omega}^{(t+1)}\mathbf{H}_{k}^{\dagger}+\mathbf{\Sigma}_{\mathbf{z}_{k}},\\
\sum_{l\in\mathcal{N}_{U}\setminus\{k\}}\mathbf{H}_{k}\mathbf{R}_{l}^{(t)}\mathbf{H}_{k}^{\dagger}+\mathbf{H}_{k}\mathbf{\Omega}^{(t)}\mathbf{H}_{k}^{\dagger}+\mathbf{\Sigma}_{\mathbf{z}_{k}}
\end{array}\right)\nonumber \\
- & \varphi\left(\begin{array}{c}
\sum_{l\in\mathcal{N}_{U}}\mathbf{H}_{\bar{k}}\mathbf{R}_{l}^{(t+1)}\mathbf{H}_{\bar{k}}^{\dagger}+\mathbf{H}_{\bar{k}}\mathbf{\Omega}^{(t+1)}\mathbf{H}_{\bar{k}}^{\dagger}+\mathbf{\Sigma}_{\mathbf{z}_{\bar{k}}},\\
\sum_{l\in\mathcal{N}_{U}}\mathbf{H}_{\bar{k}}\mathbf{R}_{l}^{(t)}\mathbf{H}_{\bar{k}}^{\dagger}+\mathbf{H}_{\bar{k}}\mathbf{\Omega}^{(t)}\mathbf{H}_{\bar{k}}^{\dagger}+\mathbf{\Sigma}_{\mathbf{z}_{\bar{k}}}
\end{array}\right),\nonumber
\end{align}
and
\begin{align}
 & \tilde{g}_{\mathcal{S}}\left(\{\mathbf{R}^{(t+1)},\mathbf{\Omega}^{(t+1)}\},\{\mathbf{R}^{(t)},\mathbf{\Omega}^{(t)}\}\right)\label{eq:linearized-fronthaul-constraint}\\
\triangleq & \sum_{i\in\mathcal{S}}\log_{2}\det\left(\mathbf{E}_{i}^{\dagger}(\sum_{k\in\mathcal{N}_{U}}\mathbf{R}_{k}^{(t)}+\mathbf{\Omega}^{(t)})\mathbf{E}_{i}\right)\nonumber \\
 & -\log_{2}\det\left(\mathbf{E}_{\mathcal{S}}^{\dagger}\mathbf{\Omega}^{(t+1)}\mathbf{E}_{\mathcal{S}}\right)\nonumber \\
 & +\sum_{i\in\mathcal{S}}\varphi\left(\begin{array}{c}
\sum_{k\in\mathcal{N}_{U}}\mathbf{E}_{i}^{\dagger}\mathbf{R}_{k}^{(t+1)}\mathbf{E}_{i}+\mathbf{E}_{i}^{\dagger}\mathbf{\Omega}^{(t+1)}\mathbf{E}_{i},\\
\sum_{k\in\mathcal{N}_{U}}\mathbf{E}_{i}^{\dagger}\mathbf{R}_{k}^{(t)}\mathbf{E}_{i}+\mathbf{E}_{i}^{\dagger}\mathbf{\Omega}^{(t)}\mathbf{E}_{i}
\end{array}\right)\nonumber
\end{align}
with the definition $\varphi(\mathbf{X},\mathbf{Y})\triangleq\log\det\left(\mathbf{Y}\right)+\mathrm{tr}(\mathbf{Y}^{-1}(\mathbf{X}-\mathbf{Y}))/\ln2$.

After convergence of Algorithm 1, the actual precoding matrix $\mathbf{A}_{k}$
for UE $k$ is obtained via rank reduction as $\mathbf{A}_{k}\leftarrow\mathbf{V}_{k}\mathbf{D}_{k}^{1/2}$,
where $\mathbf{D}_{k}$ is a diagonal matrix whose diagonal elements
are the $d_{k}$ leading eigenvalues of $\mathbf{R}_{k}^{(t+1)}$
and the columns of $\mathbf{V}_{k}$ are the corresponding eigenvectors.
This transformation from $\mathbf{R}_{k}^{(t+1)}$ to $\mathbf{A}_{k}$
may cause suboptimality (in terms of local optima) when the rank of
the matrix $\mathbf{R}_{k}^{(t+1)}$ is larger than $d_{k}$. However,
note that the obtained precoding matrices $\{\mathbf{A}_{k}\}_{k\in\mathcal{N}_{U}}$
together with $\mathbf{\Omega}^{(t+1)}$ are feasible in that they
satisfy the conditions (\ref{eq:constraint-fronthaul-linear}) and
(\ref{eq:constraint-power-linear}), since the matrices $\{\mathbf{R}_{k}^{(t+1)}\}_{k\in\mathcal{N}_{U}}$
satisfy (\ref{eq:constraint-fronthaul-linear}) and (\ref{eq:constraint-power-linear})
at each iteration. We also mention that, when the matrices $\mathbf{A}$
and $\mathbf{\Omega}$ are optimized upon as per problem (\ref{eq:problem-linear}),
we necessarily have that $f_{k}\left(\mathbf{A},\mathbf{\Omega}\right)\geq0$.
The reason is that it is always possible to obtain $f_{k}\left(\mathbf{A},\mathbf{\Omega}\right)=0$
by setting $\mathbf{A}_{k}=\mathbf{0}$ in order to satisfy the constraints
(\ref{eq:constraint-fronthaul-linear}) and (\ref{eq:constraint-power-linear}).

\begin{algorithm}
\caption{DC programming algorithm for problem (\ref{eq:problem-linear})}

1. Initialize the matrices $\mathbf{R}^{(1)}$ and $\mathbf{\Omega}^{(1)}$
to arbitrary feasible positive semidefinite matrices for problem (\ref{eq:problem-linear})
and set $t=1$.

2. Update the matrices $\mathbf{R}^{(t+1)}$ and $\mathbf{\Omega}^{(t+1)}$
as a solution of the convex problem
\begin{align*}
\underset{\mathbf{R}^{(t+1)},\mathbf{\Omega}^{(t+1)}\succeq\mathbf{0}}{\mathrm{maximize}} & \sum_{k\in\mathcal{N}_{U}}w_{k}\tilde{f}_{k}\left(\{\mathbf{R}^{(t+1)},\mathbf{\Omega}^{(t+1)}\},\{\mathbf{R}^{(t)},\mathbf{\Omega}^{(t)}\}\right)\\
\mathrm{s.t.}\,\, & \tilde{g}_{\mathcal{S}}\left(\{\mathbf{R}^{(t+1)},\mathbf{\Omega}^{(t+1)}\},\{\mathbf{R}^{(t)},\mathbf{\Omega}^{(t)}\}\right)\\
 & \,\,\,\,\leq\sum_{i\in\mathcal{S}}C_{i},\,\,\mathrm{for\,\,all}\,\,\mathcal{S}\subseteq\mathcal{N}_{R},\\
 & \sum_{k\in\mathcal{N}_{U}}\mathrm{tr}\left(\mathbf{E}_{i}^{\dagger}\mathbf{R}_{k}^{(t+1)}\mathbf{E}_{i}\right)+\mathrm{tr}(\mathbf{\Omega}_{i,i}^{(t+1)})\\
 & \,\,\,\,\leq P_{i},\,\,\mathrm{for\,\,all}\,\,i\in\mathcal{N}_{R}.
\end{align*}

3. Stop if a convergence criterion is satisfied. Otherwise, set $t\leftarrow t+1$
and go back to Step 2.
\end{algorithm}

\section{Numerical Results\label{sec:Numerical-Results}}

In this section, we present numerical results to validate the effectiveness
of the proposed secure design based on multivariate compression. We
compare four different strategies, i.e., non-secure and secure design
based on point-to-point and multivariate fronthaul compression strategies.
For the non-secure design, the problem (\ref{eq:problem-linear})
is tackled without taking into account the second term in (\ref{eq:achievable-rate-mutual-information})
that represents the penalty for guaranteeing the security. The so-obtained
precoding and quantization noise covariance matrices are then used
in (\ref{eq:achievable-rate-1}) to evaluate the secrecy sum-rate.
Unless stated otherwise, we focus on evaluating the average secrecy
sum-rate performance given in (\ref{eq:objective-linear}). We assume
that the locations of RUs and UEs are sampled from a uniform distribution
within a square area of side length $500$m, and the channel matrices
$\mathbf{H}_{k,i}$ are modeled as $\mathbf{H}_{k,i}=\sqrt{\gamma_{k,i}}\mathbf{H}_{k,i}^{w}$,
where the path-loss $\gamma_{k,i}$ is obtained as $\gamma_{k,i}=1/(1+(d_{k,i}/d_{0})^{\alpha})$
with $\alpha$, $d_{k,i}$ and $d_{0}$ being the path-loss exponent,
the distance between RU $i$ and UE $k$ and the reference distance,
respectively, and the elements of the channel matrices $\mathbf{H}_{k,i}^{w}$
are independent and identically distributed (i.i.d.) as $\mathcal{CN}(0,1)$.
We also assume that $P_{i}=P$, $C_{i}=C$ and $\mathbf{\Sigma}_{\mathbf{z}_{k}}=\mathbf{I}$
for all $i\in\mathcal{N}_{R}$ and $k\in\mathcal{N}_{U}$, and $\alpha=3$
and $d_{0}=50$m.

\begin{figure}
\centering\includegraphics[width=8.5cm,height=6.7cm]{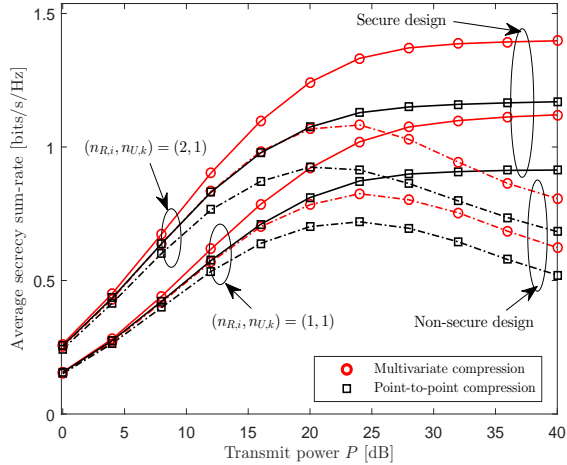}

\caption{\label{fig:graph-as-SNR}Average secrecy sum-rate versus the transmission
power $P$ for the downlink of a C-RAN with $N_{R}=2$, $N_{U}=3$,
$n_{U,k}=1$ and $C=2$ bits/s/Hz.}
\end{figure}

In Fig. \ref{fig:graph-as-SNR}, we plot the average secrecy sum-rate
versus the transmission power $P$ for the downlink of a C-RAN system
with $N_{R}=2$, $N_{U}=3$, $n_{U,k}=1$ and $C=2$ bits/s/Hz. It
is seen that the secure design significantly outperforms the non-secure
design. Also, it is observed that multivariate compression yields
a significant performance gain that is increasing with the transmission
power $P$. This is because the impact of the quantization noise $\mathbf{H}_{k}\mathbf{q}$
in (\ref{eq:received-signal-rewritten}) compared to the additive
noise $\mathbf{z}_{k}$ is more significant when the SNR is large
at the UE side. It is noted that the secrecy sum-rate of the secure
design saturates to a finite level at high-SNR, since, in this regime,
the performance is limited by the power of the quantization noise
that does not decrease with the SNR. Moreover, the performance of
the non-secure design is degraded in the high-SNR regime due to the
enhanced decodability of the messages of the unintended UEs. We also
observe that, comparing the curves with $(n_{R,i},n_{U,k})=(1,1)$
and $(n_{R,i},n_{U,k})=(2,1)$ suggests that increasing the number
of RU antennas results in improved performance since the additional
excessive antennas can be leveraged to obtain beamforming gains.

\begin{figure}
\centering\includegraphics[width=8.5cm,height=6.7cm]{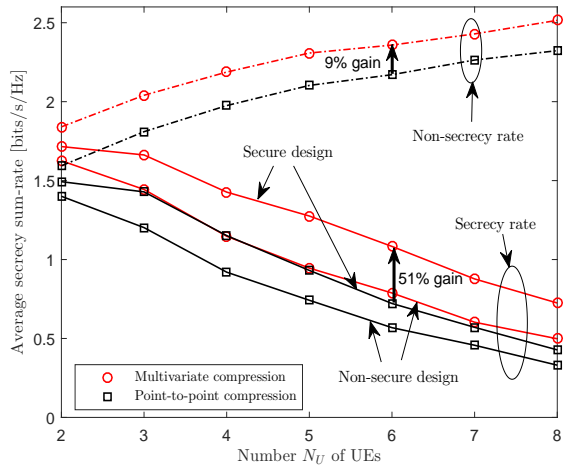}

\caption{\label{fig:graph-as-numUE}Average secrecy sum-rate and non-secrecy
sum-rate versus the number $N_{U}$ of UEs for the downlink of a C-RAN
with $N_{R}=3$, $n_{R,i}=n_{U,k}=1$, $P=20$ dB and $C=1$ bits/s/Hz.}
\end{figure}

Fig. \ref{fig:graph-as-numUE} plots the average secrecy sum-rate,
along with the standard ``non-secrecy'' sum-rate obtained without
imposing secrecy constraints, versus the number $N_{U}$ of UEs for
the downlink of C-RAN with $N_{R}=3$, $n_{R,i}=n_{U,k}=1$, $P=20$
dB and $C=1$ bits/s/Hz. The ``non-secrecy'' rates are obtained
by plugging the precoding and quantization noise covariance matrices
of the non-secure design into the weighted sum $\sum_{k\in\mathcal{N}_{U}}w_{k}R_{k}^{\prime}$
of rates $R_{k}^{\prime}=I(\mathbf{s}_{k};\mathbf{y}_{k})$ that are
achievable without secrecy constraints \cite[Eq. (13)]{Park-et-al:TSP13}.
We observe that, unlike the non-secrecy rate, as the number $N_{U}$
of UEs increases, the secrecy rate of all schemes becomes worse due
to the increased number $N_{U}-1$ of eavesdroppers on each message
$M_{k}$. However, the proposed secure design based on multivariate
compression provides a significant benefit in mitigating the impairments
from the eavesdroppers compared to the other schemes. In particular,
for $N_{U}=6$, multivariate compression yields a 51\% sum-rate gain
under the secrecy constraint while about 9\% is gained without secrecy
constraint. This demonstrates the additional role of artificial noise
that quantization noise shaping plays under secrecy constraint.

\section{Conclusions\label{sec:Conclusions}}

This work has proposed to leverage the quantization noise that is
inevitably added by fronthaul compression in a C-RAN downlink as ``artificial''
noise for enhancing the rate achievable under secrecy constraints.
To this end, we have investigated the application of multivariate
fronthaul quantization/compression at the CU in order to control the
statistics of the quantization noise across all the outgoing fronthaul
links. We have formulated the joint optimization problem of the precoding
and quantization noise covariance matrices for maximizing the weighted
sum of secrecy rates of the UEs subject to the per-RU fronthaul capacity
and power constraints. Numerical results were provided to verify the
effectiveness of multivariate compression in enhancing secret communication.


\begin{thebibliography}{15}
\bibitem{Wyner}
A. D. Wyner, "The wire-tap channel," \textit{Bell System Tech. Journ.}, vol. 54, pp. 1355-1387, 1975.
\bibitem{Csiszar-Korner}
I. Csiszar and J. Korner, "Broadcast channels with confidential messages," \textit{IEEE Trans. Inf. Theory}, vol. IT-24, no. 3, pp. 339-348, May 1978.
\bibitem{Liang-et-al:09}
Y. Liang, H. V. Poor and S. Shamai (Shitz), "Information theoretic security," \textit{Foundations and Trends in Comm. and Inf. Theory}, vol. 5, no. 4-5, pp. 355-580, Jun. 2009.
\bibitem{Bloch-Barros}
M. Bloch and J. Barros, \textit{Physical-layer security}, Cambridge University Press, Nov. 2011.
\bibitem{Goel}
S. Goel and R. Negi, "Guaranteeing secrecy using artificial noise," \textit{IEEE Trans. Wireless Comm.}, vol. 7, no. 6, pp. 2180-2189, Jun. 2008.
\bibitem{Tekin-Yener}
E. Tekin and A. Yener, "The general Gaussian multiple-access and two-way wiretap channels: Achievable rates and cooperative jamming," \textit{IEEE Trans. Inf. Theory}, vol. 54, no. 6, pp. 2735-2751, Jun. 2008.
\bibitem{Zhou-McKay}
X. Zhou and M. R. McKay, "Secure transmission with artificial noise over fading channels: Achievable rate and optimal power allocation," \textit{IEEE Trans. Veh. Technol.}, vol. 59, no. 8, pp. 3831-3842, Oct. 2010.
\bibitem{Li-et-al:CL}
W. Li, M. Ghogho, B. Chen and C. Xiong, "Secure communication via sending artificial noise by the receiver: Outage secrecy capacity/region analysis," \textit{IEEE Comm. Letters}, vol. 16, no. 10, pp. 1628-1631, Oct. 2012.
\bibitem{Tsiligkaridis}
T. Tsiligkaridis, "Secure MIMO communications under quantized channel feedback in the presence of jamming," \textit{IEEE Trans. Sig. Processing}, vol. 62, no. 23, pp. 6265-6275, Dec. 2014.
\bibitem{Liu-et-al:TIT10}
R. Liu, T. Liu, H. V. Poor and S. Shamai (Shitz), "Multiple-input multiple-output Gaussian broadcast channels with confidential messages," \textit{IEEE Trans. Inf. Theory}, vol. 56, no. 9, pp. 4215-4227, Sep. 2010.
\bibitem{Park-et-al:TSP13}
S.-H. Park, O. Simeone, O. Sahin and S. Shamai (Shitz), "Joint precoding and multivariate backhaul compression for the downlink of cloud radio access networks," \textit{IEEE Trans. Sig. Processing}, vol. 61, no. 22, pp. 5646-5658, Nov. 2013.
\bibitem{Simeone-et-al:JCN}
O. Simeone, A. Maeder, M. Peng, O. Sahin and W. Yu, "Cloud radio access network: Virtualizing wireless access for dense heterogeneous systems," \textit{Journ. Comm. Networks}, vol. 18, no. 2, pp. 135-149, Apr. 2016.
\bibitem{ElGamal}
A. E. Gamal and Y.-H. Kim, \textit{Network information theory}, Cambridge University Press, 2011.
\bibitem{Simeone-et-al:ETT09}
O. Simeone, O. Somekh, H. V. Poor and S. Shamai (Shitz), "Downlink multicell processing with limited backhaul capacity," \textit{EURASIP Journal on Advances in Signal Processing}, 2009.
\end{thebibliography}
\end{document}